\documentclass[conference]{IEEEtran}
\usepackage[utf8]{inputenc}
\usepackage{amsmath}
\usepackage{amsfonts}
\usepackage{amssymb}
\usepackage{mathtools}
\usepackage{amsthm}
\usepackage{graphicx}
\usepackage{bm}
\usepackage[caption=false, font=footnotesize]{subfig}
\usepackage{xcolor}

\newtheorem{proposition}{Proposition}
\newtheorem{corollary}{Corollary}
\newtheorem{remark}{Remark}
\newtheorem{definition}{Definition}

\newcommand{\E}{\operatorname{E}}
\newcommand{\I}{\operatorname{I}}
\newcommand{\h}{\operatorname{h}}

\newcommand{\pr}{\operatorname{P}}

\renewcommand{\IEEEQED}{\IEEEQEDopen}
\begin{document}

\title{An Information-Theoretic Metric for Semantic Value of Spatiotemporal Information}
\author{
    \IEEEauthorblockN{
        Zijing Wang\IEEEauthorrefmark{1}, Zhijin Qin\IEEEauthorrefmark{2}, Siyu Lin\IEEEauthorrefmark{1}, and Wei Feng\IEEEauthorrefmark{2}
    }
    \IEEEauthorblockA{
        \IEEEauthorrefmark{1}School of Electronic and Information Engineering, Beijing Jiaotong University, Beijing 100044, China
    }
    \IEEEauthorblockA{
         \IEEEauthorrefmark{2}Department of Electronic Engineering, Tsinghua University, Beijing 100084, China\\
        }
        \IEEEauthorblockA{
        e-mail: zjwang1@bjtu.edu.cn; qinzhijin@tsinghua.edu.cn; sylin@bjtu.edu.cn; fengwei@tsinghua.edu.cn 
    }
}
\maketitle

\begin{abstract}
With the explosive growth of network scale and data volume, wireless communication is facing an increasingly severe limitation of spectrum resources. Semantic communication has emerged as a promising paradigm to break the bandwidth bottleneck by transmitting significant task-oriented semantic information rather than raw data. In practical real-time wireless applications, semantics of information exhibit diverse spatial and temporal correlations depending on intrinsic dynamics of source and extrinsic dynamics of environment. Motivated by this observation, this paper develops a novel information-theoretic metric to quantify the semantic value of spatiotemporal information. Specifically, a semantic value of information (SVoI) framework is proposed based on the mutual information, which characterises the reduction in uncertainty when predicting an unknown system state using past semantic spatiotemporal correlated observations. Focusing on general Gaussian Markov models, closed-form expressions of the SVoI are derived. Effects of both separable and coupled spatiotemporal correlations on SVoI are further investigated analytically. Numerical simulations are conducted to validate the theoretical analysis of SVoI and its bounds. The proposed SVoI metric jointly captures the impact of semantic spatiotemporal correlation of source, timeliness of information, and channel conditions, which could serve as an effective optimisation objective for the design of next-generation semantic-aware communication systems.
\end{abstract}

\begin{IEEEkeywords}
Semantic communications, semantic value of information, and spatiotemporal information measurement.
\end{IEEEkeywords}

\section{Introduction}

With the network scale and data volume growing explosively, wireless communication systems face severe challenge of limited spectrum resources. Semantic communication has emerged as a new paradigm, providing potentials to break the bandwidth bottleneck~\cite{ProcIEEESurvey}. Rather than accurately delivering bit sequences with minimal distortion, semantic communication aims to effectively acquire goal-relevant or important information for task execution, thereby reducing communication overhead significantly. While semantic communication is envisioned as a key enabler of next-generation wireless systems, information-theoretic basis of semantic communication remains largely unexplored, and there is currently no universally accepted metric to quantify the semantic information. 

Existing theoretical studies primarily measure the semantic information from the perspective of information content and meaning. Logical probability was first introduced to define semantic entropy~\cite{SemEntropyCarnap52} by replacing traditional probability to quantify meaning uncertainty in the field of natural language processing. Semantic channel capacity~\cite{SemChannelCapBao11} was further formulated by explicitly accounting for semantic ambiguity, and its relationship to traditional Shannon channel capacity was also discussed. The classic Shannon information theory framework was mapped to its semantic counterpart by assuming the existence of synonymous mapping~\cite{niu2024}, which is a function to characterise semantic equivalence among different information but with same meaning. The rate distortion tradeoff between unobserved intrinsic semantic features and extrinsic noisy observations was further studied in~\cite{RateDistortion23TCOM}. 

The above works largely focus on a single piece of message, while ignoring the spatiotemporal correlation and temporal evolution of information, which are intrinsic properties of many data sources. In time-sensitive wireless applications, such as autonomous driving, environmental surveillance and industrial Internet of Things, the semantic relevance of information is not only determined by isolated messages, but is fundamentally shaped by when and where data is generated and received. Therefore, the demand for a rigorous semantic metric has led to the evolution of several time or space-dependent frameworks to quantify the quality of semantic information.

The measurement of semantic information has been largely explored from the perspective of information timeliness. Age of incorrect information (AoII)~\cite{AoIIEnabler} was proposed as a new semantic metric by combining the traditional linear age of information (AoI)~\cite{AoI2012infocom} and a task-oriented utility function. Furthermore, some works moved towards information-theoretic and estimation theory-based interpretations. The value of information (VoI)~\cite{previousTIT} utilised mutual information to measure the reduction in uncertainty of current status given past observations. A semantic-aware cost of actuation error (CAE) was defined as the weighted estimation cost for real-time control applications~\cite{nicolasCAE}, in which the weights are assigned based on the semantic significance. Regarding spatial dimensions,~\cite{AOISpatiotemporal} studied the spatiotemporal analysis of AoI, where network interference and link interactions cause AoI to fluctuate across space and time. Both spatial and temporal correlations were considered in remote sensing image processing~\cite{RemoteSensingSpatiotemporal} in which semi-supervised architecture was designed for efficient semantic feature extraction.

Despite these advances, existing metrics primarily treat the data as a one-dimensional temporal stream. Although some works considered the spatiotemporal correlation in terms of semantic feature extraction, they lack an information-theoretic foundation that can analytically quantify the semantic value across varying spatiotemporal intensities. In practical physical processes, the underlying physical phenomenon is inherently spatiotemporal, while different processes exhibit varying degrees of spatial and temporal correlation depending on their intrinsic dynamics and environmental conditions. Therefore, we are motivated to fill this gap by developing a new metric to capture the spatiotemporal correlation nature of semantic information in the context of information theory.

In this paper, we introduce a semantic value of information (SVoI) metric based on mutual information, which could be interpreted as the reduction in uncertainty to predict the unknown status through its past semantic spatiotemporal correlated observations. The closed-form expression of SVoI as well as its bounds are derived in general Gaussian Markov models. Effects of both separable and coupled semantic spatiotemporal correlations on SVoI are further analysed in detail. The performance of the SVoI is verified through simulation results. The proposed SVoI framework provides an analytically tractable tool for semantic information measurement and is able to work as the objective to facilitate the design and optimisation of semantic communication systems.

The rest of this paper is organised as follows. The proposed SVoI framework as well as its bounds are presented in Section II. The SVoI in general Gaussian Markov models is studied in Section III. The effect of semantic spatiotemporal correlation on the SVoI is discussed in Section IV. Numerical results are provided in Section V. Section VI concludes this work.

\section{Semantic Value of Information Formulation}
\subsection{Network Setting}
We assume that multiple unmanned devices are randomly deployed in the area to perform a monitoring task. These devices sample data consistently to get the latest status of the target physical process and transmit the data to a remote destination for analysis and processing. For each single device, the data samples are temporally correlated. For any two different devices, their data samples are spatially correlated. In practice, the underlying physical process, such as temperature, soil moisture level, and remote sensing images, is inherently spatiotemporal, but could exhibit different level of spatial and temporal correlation. 

Without loss of generality, the target physical phenomenon is modelled as a spatiotemporal stochastic process $\{X_{s,t}\}$ in which random variables $s$ and $t$ represent the space and time dimensions, respectively. As shown in Fig.~\ref{fig: model}, from the perspective of transmitter, the data sampled by device $i$ at time $t_j$ is denoted as 
\begin{equation}
    X_{s_i,t_j}: \bf{S} \times \bf{T} \to \mathbb{R}, 
\end{equation}
in which ${s_i}$ denotes the sampling position of the data packet, involving the two-dimensional coordinates of device $i$. The data samples are assumed to be transmitted to a remote destination though a noisy wireless channel. 

\begin{figure}
    \centering
    \includegraphics[height=6.8cm]{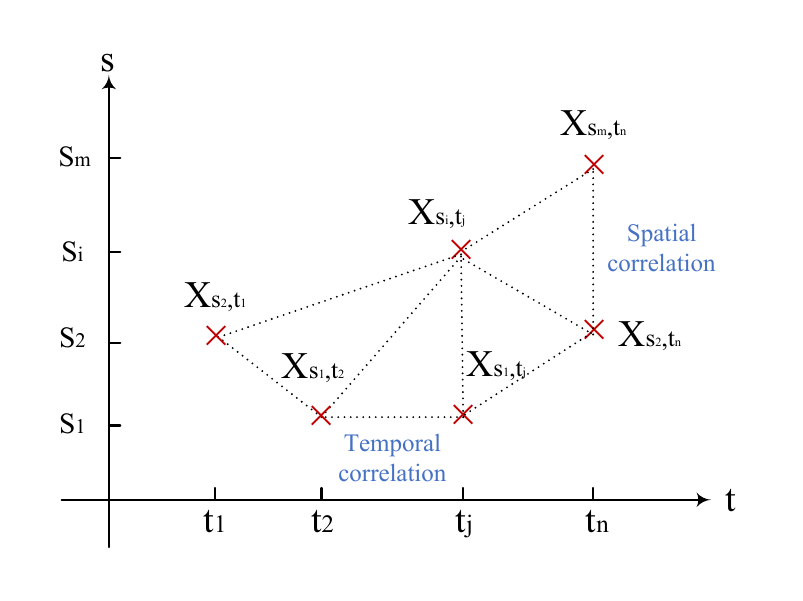}
    \caption{Spatiotemporal correlation model.}
    \label{fig: model}
\end{figure}

From the perspective of the receiver, the corresponding observation is denoted as $Y_{s_i,t'_j}$ where $t'_j$ is the receiving time. Due to the transmission delay, we have $t'_j \geq t_j$. Due to the negative impact of the transmission channel, the observation may be corrupted, i.e., $Y_{s_i,t'_j} \neq X_{s_i,t_j}$, but we have
\begin{equation}
    \pr[Y_{s_i,t'_j}\in \mathbb{A} | X_{s_i, t_1},\ldots,X_{s_i,t_j}] = \pr[Y_{s_i,t_j'}\in \mathbb{A} | X_{s_i,t_j}]
\end{equation}
for all device nodes $i$ and all admissible sets $\mathbb{A}$. The corruption in the observation could be caused by the channel fading, noise and interference from other devices. Given the current time $t$, the total number of observations is denoted as $n$. Therefore, data packets sampled by device $i$ and the observations at receiver during the time period $(0,t)$ can be given as $\{X_{s_i,t_1},\cdots,X_{s_i,t_n}\}$ and $\{Y_{s_i,t'_1},\cdots,Y_{s_i,t'_n}\}$ where $t_n \leq t'_n \leq t \leq t'_{n+1}$.


\subsection{Definition of SVoI}
The definition of the proposed SVoI metric is given as follows.
\begin{definition}
    The SVoI is defined as the mutual information between the unobserved status (at space $s$ and time $t$) $X_{s,t}$ of the transmitter and all the past observations $\bf{Y}^{m \times n}$ of the receiver, which is given as
\begin{equation}
\label{eq: SVoI definition}
    {\rm{SVoI}}(s,t) = \I({X_{s,t}}; {\bf{Y}}^{m \times n}),
\end{equation}
where matrix ${\bf{Y}}$ is given as
\begin{equation}
    {\bf{Y}}^{m \times n} = \left[ {\begin{array}{*{20}{c}}
{{Y_{{s_m},t{'_n}}}}&{{Y_{{s_m},t{'_{n - 1}}}}}& \cdots &{{Y_{{s_m},t{'_1}}}}\\
{{Y_{{s_{m - 1}},t{'_n}}}}&{{Y_{{s_{m - 1}},t{'_{n - 1}}}}}& \cdots &{{Y_{{s_{m - 1}},t{'_1}}}}\\
 \vdots & \vdots & \ddots & \vdots \\
{{Y_{{s_1},t{'_n}}}}&{{Y_{{s_1},t{'_{n - 1}}}}}& \cdots &{{Y_{{s_1},t{'_1}}}}
\end{array}} \right].
\end{equation}
\end{definition}
Here, ${\bf{Y}}$ is an $m\times n$-dimensional matrix, recording all the past observations. The corresponding data sample matrix at the source is denoted as
\begin{equation}
    {\bf{X}}^{m \times n} = \left[ {\begin{array}{*{20}{c}}
{{X_{{s_m},t{_n}}}}&{{X_{{s_m},t{_{n - 1}}}}}& \cdots &{{X_{{s_m},t{_1}}}}\\
{{X_{{s_{m - 1}},t{_n}}}}&{{X_{{s_{m - 1}},t{_{n - 1}}}}}& \cdots &{{X_{{s_{m - 1}},t{_1}}}}\\
 \vdots & \vdots & \ddots & \vdots \\
{{X_{{s_1},t{_n}}}}&{{X_{{s_1},t{_{n - 1}}}}}& \cdots &{{X_{{s_1},t{_1}}}}
\end{array}} \right].
\end{equation}
For notational convenience, we index the spatial observations from $m$ to $1$, where $i=m$ corresponds to the spatial location closest to the point of interest $s$. Similarly, the temporal observations are indexed from $n$ to $1$, where $j=n$ denotes the most recent time instant of the current time $t$. 

The SVoI is interpreted as the semantic information content of past observations to predict the unobserved status via their semantic spatiotemporal correlation, measuring how much those past observations reduce uncertainty about the target physical process at unknown status $(s,t)$.


\subsection{Bounds and Special Cases of SVoI}
Definition 1 gives a general and rigorous information-theoretic SVoI definition for arbitrary spatiotemporal processes. To gain more insight into the behaviour of SVoI, special cases are explored in this subsection to reveal the evolution and bounds of SVoI under different properties of spatiotemporal random processes.

The special cases considered include situations where the underlying random process $\{X_{s,t}\}$ is Markov, the observed process $\{Y_{s,t}\}$ exhibits Markov properties, and the observations are not affected by noise corruption. In this context, we have the following proposition.
\begin{proposition}
    The lower bound of the SVoI is given as
    \begin{equation}
    \label{eq:lower bound Y}
     {\rm{SVoI}}(s,t) \geq \I({X_{s,t}};{Y_{{s_m},t{'_n}}}),
\end{equation}
in which the equality holds if $X_{s,t} \perp\!\!\!\perp {\bf{Y}^{m \times n}} \,\big|\, {Y_{{s_m},t{'_n}}}$. An upper bound of the SVoI is given as
    \begin{equation}
    \label{eq:upper bound Xmn}
     {\rm{SVoI}}(s,t) \leq \I({X_{s,t}};{{\bf{X}}^{m \times n}}),
\end{equation}
in which the equality holds if there is no corruption in the observation at the receiver. When the underlying random process $\{X_{s,t}\}$ is a first-order Markov process in both time and space dimensions, the upper bound in~\eqref{eq:upper bound Xmn} turns out to be a tighter bound which is given as
\begin{equation}
    \label{eq:upper bound X }
     {\rm{SVoI}}(s,t) \leq \I({X_{s,t}};{X_{{s_m},{t_n}}}).
\end{equation}
\end{proposition}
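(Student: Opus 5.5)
All three claims will follow from the chain rule for mutual information and the data processing inequality, applied to suitable Markov chains.

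\emph{Lower bound.} Split $\mathbf{Y}^{m\times n}$ into the single entry $Y_{s_m,t'_n}$ and the remaining entries $\mathbf{Y}_{-}$. The chain rule gives
\begin{equation*}
\I(X_{s,t};\mathbf{Y}^{m\times n}) = \I(X_{s,t};Y_{s_m,t'_n}) + \I(X_{s,t};\mathbf{Y}_{-}\mid Y_{s_m,t'_n}).
\end{equation*}
The second term is a conditional mutual information, so it is nonnegative, and \eqref{eq:lower bound Y} follows. This term vanishes exactly when $X_{s,t}$ and $\mathbf{Y}_{-}$ are conditionally independent given $Y_{s_m,t'_n}$. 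That is the stated condition $X_{s,t} \perp\!\!\!\perp \mathbf{Y}^{m\times n} \mid Y_{s_m,t'_n}$, since conditioning on $Y_{s_m,t'_n}$ makes that component trivial.

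\emph{Upper bound \eqref{eq:upper bound Xmn}.} I will establish the Markov chain $X_{s,t}\to \mathbf{X}^{m\times n}\to \mathbf{Y}^{m\times n}$ and then invoke the data processing inequality. The channel assumption in the network model says each $Y_{s_i,t'_j}$ depends on the source only through $X_{s_i,t_j}$. I read this, as the model intends, to mean that the channel corruption is generated independently of the unobserved state $X_{s,t}$ once $\mathbf{X}^{m\times n}$ is given. Under that reading $\mathbf{Y}$ is a (possibly randomized) function of $\mathbf{X}$ plus noise independent of $X_{s,t}$, which gives the chain and hence $\I(X_{s,t};\mathbf{Y})\le \I(X_{s,t};\mathbf{X})$.

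If there is no corruption, $\mathbf{Y}^{m\times n}=\mathbf{X}^{m\times n}$ entrywise. The delay only relabels the time index, so both sides coincide and equality holds.

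\emph{Tighter bound \eqref{eq:upper bound X }.} Assume $\{X_{s,t}\}$ is first-order Markov in both time and space, interpreted as follows: given the nearest sample $X_{s_m,t_n}$, the target $X_{s,t}$ is conditionally independent of all other entries of $\mathbf{X}^{m\times n}$. This uses the ordering convention that $s_m$ is the location closest to $s$ and $t_n$ is the time closest to $t$. Applying the chain rule as in the lower-bound step, now with $\mathbf{X}$ in place of $\mathbf{Y}$, the conditional term vanishes and $\I(X_{s,t};\mathbf{X}^{m\times n})=\I(X_{s,t};X_{s_m,t_n})$. Combining this with the previous step gives \eqref{eq:upper bound X }.

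Strictly, the bound is not tighter than \eqref{eq:upper bound Xmn} but equal to it under this hypothesis. The gain is that it involves a single scalar, which makes it more explicit.

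\emph{Main obstacle.} The hardest part is pinning down precisely what ``first-order Markov in space and time'' means for a two-dimensional index set, so that the conditional independence of $X_{s,t}$ from the rest of $\mathbf{X}$ given $X_{s_m,t_n}$ is justified. I would first state it as a separable (product-type) Markov field: the process is Markov along time at each location and along space at each time, with the nearest corner screening off all others. I would then verify the screening property by writing the joint density as a product of conditionals along a path from $(s_1,t_1)$ to $(s_m,t_n)$ and then to $(s,t)$. A secondary subtlety is making the Markov chain $X_{s,t}\to\mathbf{X}\to\mathbf{Y}$ rigorous from the per-entry channel assumption, which I would handle by explicitly assuming conditionally independent channel noise.
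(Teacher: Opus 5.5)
Your proposal is correct and follows the paper's route. The lower bound is the same chain-rule/monotonicity step with the same equality condition, and your data-processing step is exactly the paper's chain $\h(X_{s,t}\mid\mathbf{Y}^{m\times n})\ge\h(X_{s,t}\mid\mathbf{Y}^{m\times n},\mathbf{X}^{m\times n})=\h(X_{s,t}\mid\mathbf{X}^{m\times n})$, whose last equality tacitly uses the conditional independence $X_{s,t}\perp\!\!\!\perp\mathbf{Y}^{m\times n}\mid\mathbf{X}^{m\times n}$ that you rightly make explicit; the paper proves nothing for the no-corruption equality or the Markov-case bound, so you go beyond it there, but take the corner-screening property as the hypothesis itself rather than deriving it from line-wise Markovity, since for devices in a 2D area it generally fails (for the paper's own kernel $e^{-\omega\|s-s_i\|}e^{-\kappa\tau}$ it holds only if $s_m$ lies on the segment between $s$ and every $s_i$).
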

\begin{proof}
    The lower bound in~\eqref{eq:lower bound Y} is obtained directly from the definition of mutual information with equality when we have $p(x_{s,t}|{\bf{y}}^{m\times n})=p(x_{s,t}|y_{{s_m},t{'_n}})$. In this case, the current unknown status of the underlying random process $X_{s,t}$ is conditionally independent of all the past spatiotemporal observations ${\bf{Y}}^{m \times n}$ given the latest and nearest observation ${Y_{{s_m},t{'_n}}}$. The upper bound in~\eqref{eq:upper bound Xmn}  is obtained by
        \begin{equation}
            \begin{aligned}
            {\rm{SVoI}}(s,t) &= \h({X_{s,t}}) - \h({X_{s,t}}|{{\bf{Y}}^{m \times n}})\\
 &\le \h({X_{s,t}}) - \h({X_{s,t}}|{{\bf{Y}}^{m \times n}},{{\bf{X}}^{m \times n}})\\
 &=\h({X_{s,t}}) - \h({X_{s,t}}|{{\bf{X}}^{m \times n}})\\
 &= \I({X_{s,t}};{{\bf{X}}^{m \times n}}).
        \end{aligned}  
        \end{equation}
\end{proof}
Proposition 1 illustrates insights into the evolution and limits of the SVoI in different cases. The lower bound represents the worst-case scenario, providing a baseline against which the performance of real-world systems can be compared. In contrast, upper bounds represent the ideal maximum achievable SVoI , serving as a theoretical benchmark for optimal system performance.

\section{SVoI under Gaussian Markov Models}

\subsection{Stationary Gaussian Markov Model}
The formulation and evolution of the SVoI provided in Section II accommodate arbitrary sets of spatiotemporal observations. In this section, we narrow the analysis in the context of Gaussian Markov spatiotemporal random processes to gain an intuitive understanding of the SVoI with its closed-form expression. This setup allows us to characterise the effect of spatiotemporal structure on SVoI under different Gaussian process kernels and to reveal the role of spatial and temporal correlation in shaping the semantic value. 

The underlying process is assumed to be a stationary Gaussian Markov process. Data samples generated at the source ${\bf{X}}^{m \times n}$ are jointly stationary Gaussian distributed. Due to the stationary nature, denote $\mu_x$ and $\sigma^2_x$ as the mean and variance, respectively, for any random variable $X_{s_i,t_j}$. For any two random variables $X_{s,t}$ and $X_{s_i,t_j}$, denote $\text{Cov}(X_{s,t},X_{s_i,t_j})$ as their covariance. Denote $\rho(d,\tau)$ as their autocorrelation coefficient
\begin{equation}
    \rho(d,\tau)=\frac{\text{Cov}(X_{s,t},X_{s_i,t_j})}{\sigma^2_x},
\end{equation}
where $d=||s-s_i||$ and $\tau=||t-t_j||$ due to the stationary property. The value of the autocorrelation coefficient ranges from $-1$ to $1$. If $\rho(d,\tau)$ takes $1$ (or $-1$), the correlation is perfect (or negatively perfect) in both space and time directions; $0$ means no linear correlation between the variables at the given space and time points.

These sampled data are assumed to be transmitted through a wireless channel with transmission noise. Denote ${\bf{N}}^{m \times n}$ as the noise matrix which includes all the noise variables $N_{s_i,t'_j}$ for all $i$ and $j$. The noise variables are independent and identically distributed (i.i.d.) Gaussian variables with zero mean and $\sigma^2_n$ variance, which are also independent of the data samples. In this context, the signal-to-noise ratio (SNR) could be written as
\begin{equation}
    \eta = \frac{\sigma^2_x}{\sigma^2_n}.
\end{equation}
From the perspective of the receiver, the observations are also jointly Gaussian distributed, which could be given as
\begin{equation}
     {Y_{s_i,t'_j}}= {X_{s_i,t_j}}+ {N_{s_i,t'_j}}, \quad 1 \le i \le m,   1 \le j \le n,
\end{equation}
and
\begin{equation}
     {\bf{Y}}^{m \times n}= {\bf{X}}^{m \times n}+ {\bf{N}}^{m \times n}.
\end{equation}

This assumption is widely used in practical applications, such as environmental and remote-sensing area. For example, soil moisture dynamics at temporal scales are commonly approximated by a discrete first-order autoregressive model or continuous Ornstein–Uhlenbeck (OU) model, which are typical stationary Gaussian Markov models. Similarly, spatial dependence is often captured by a Gaussian Markov random field model on a discretised grid, and the number of grid cells is $m$ in this work. The considered addictive white Gaussian noise channel could be used to model the ground-to-space channel in satellite communication networks.

\subsection{SVoI for the Gaussian Markov Model}
In the above stationary Gaussian Markov setting, the closed-form expression of SVoI in~\eqref{eq:lower bound Y} of Proposition 1 is derived with the following main result.
\begin{proposition}
    The lower bound of the SVoI in Gaussian Markov models could be given as
    \begin{equation}
        {\rm{SVoI}}(s,t) =  - \frac{1}{2}\log \left( {1 - \frac{{\eta\rho^2 (||s - {s_m}||,||t - {t_n}||)}}{{1 + \eta }}} \right).
    \end{equation}
\end{proposition}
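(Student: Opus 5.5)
The plan is to compute the lower bound of Proposition 1, $\I(X_{s,t};Y_{s_m,t'_n})$, directly from the joint Gaussian structure. Under the model, $Y_{s_m,t'_n}=X_{s_m,t_n}+N_{s_m,t'_n}$, and the noise is independent of all data samples. Hence the pair $(X_{s,t},Y_{s_m,t'_n})$ is bivariate Gaussian. Because mutual information is invariant to shifting by the means, we may take $\mu_x=0$ without loss of generality.

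First we identify the second-order statistics:
\begin{equation*}
\mathrm{Var}(X_{s,t})=\sigma_x^2, \quad \mathrm{Var}(Y_{s_m,t'_n})=\sigma_x^2+\sigma_n^2, \quad \text{Cov}(X_{s,t},Y_{s_m,t'_n})=\text{Cov}(X_{s,t},X_{s_m,t_n})=\sigma_x^2\rho,
\end{equation*}
where $\rho=\rho(||s-s_m||,||t-t_n||)$ by stationarity. The covariance identity uses the independence of $N_{s_m,t'_n}$ from $X_{s,t}$. The lag is measured to the sampling time $t_n$, not the receiving time $t'_n$, because $Y$ carries the sample generated at $t_n$.

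Next we use the standard identity for jointly Gaussian scalars, $\I(X;Y)=-\tfrac12\log(1-r^2)$, where $r$ is their correlation coefficient. One can derive it as $\h(X)-\h(X|Y)$ using $\h=\tfrac12\log(2\pi e\,\sigma^2)$ and the conditional variance $\sigma_x^2(1-r^2)$. Here
\begin{equation*}
r^2=\frac{\sigma_x^4\rho^2}{\sigma_x^2(\sigma_x^2+\sigma_n^2)}=\frac{\rho^2\,\sigma_x^2/\sigma_n^2}{1+\sigma_x^2/\sigma_n^2}=\frac{\eta\rho^2}{1+\eta}.
\end{equation*}
Substituting this $r^2$ into the identity gives the claimed expression.

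The only delicate point is interpretive rather than computational. The statement writes ${\rm SVoI}(s,t)=\ldots$, but what is actually computed is the right-hand side of \eqref{eq:lower bound Y}. It equals ${\rm SVoI}$ exactly when the conditional independence condition of Proposition 1 holds, and is a lower bound otherwise. We would state this explicitly and note that the Markov assumption is what motivates treating the latest, nearest observation as (approximately) sufficient. We would also check the degenerate cases. When $\rho=0$ the expression gives $0$. When $\eta\to\infty$ it tends to $-\tfrac12\log(1-\rho^2)$, which matches the noiseless upper bound $\I(X_{s,t};X_{s_m,t_n})$ in the Markov case of Proposition 1.
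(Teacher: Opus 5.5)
Your proposal is correct and follows the paper's appendix proof: both treat $(X_{s,t},Y_{s_m,t'_n})$ as bivariate Gaussian and use $\I=-\tfrac12\log(1-r^2)$ with $r=\rho/\sqrt{1+1/\eta}$. The only difference is that the paper writes this identity through the covariance determinant, whereas you derive it from $\h(X)-\h(X|Y)$. Your caveat that the computed quantity is the lower bound $\I(X_{s,t};Y_{s_m,t'_n})$, not the exact SVoI, is accurate and agrees with the paper's own wording.
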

\begin{proof}
   See the appendix.
\end{proof}

The lower bound of SVoI is considered as the exact SVoI involves high-dimensional mutual information with complex closed form. It is not just a simplification, but a practical way to capture the fundamental properties of the SVoI in a tractable form, enabling the practical system design by working as the optimisation goal. 

This proposition shows that the SVoI relates to the following two factors: the semantic spatiotemporal correlation property of the underlying source data (which is captured by the function $\rho(\cdot)$) and the SNR of the transmission channel (which is captured by the variable $\eta$). Their effect on the SVoI could be summarised as follows.

\begin{corollary}
    As the SNR turns to infinity, the lower bound of SVoI converges to the SVoI, i.e.,
    \begin{equation}
    \label{eq:large SNR}
         {\rm{SVoI}}(s,t) \to - \frac{1}{2}\log \left( {1 - {\rho ^2}(||s - {s_m}||,||t - {t_n}||)} \right).
    \end{equation}
As the correlation coefficient turns to $1$, the lower bound of the SVoI converges to
\begin{equation}
\label{eq: large rho}
     {\rm{SVoI}}(s,t) \to - \frac{1}{2}\log \left( {1 - \frac{{\eta}}{{1 + \eta }}} \right).
\end{equation}
As the SNR turns to be zero (or the correlation coefficient turns to $0$), we have ${\rm{SVoI}}(s,t) \to 0$.
\end{corollary}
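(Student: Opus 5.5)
The plan is to work directly with the closed-form expression of Proposition 2. Write $\rho := \rho(||s-s_m||,||t-t_n||)$ for brevity and set
\begin{equation*}
f(\eta,\rho) = -\frac{1}{2}\log\left(1-\frac{\eta\rho^2}{1+\eta}\right).
\end{equation*}
Since $|\rho|\le 1$ and $\eta/(1+\eta)<1$ for finite $\eta$, the argument of the logarithm lies in $(0,1]$. Hence $f$ is well defined and is continuous in $(\eta,\rho)$ on $[0,\infty)\times(-1,1)$, and also at $|\rho|=1$ for finite $\eta$. Each limit in the corollary then reduces to passing the limit inside the logarithm.

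\emph{Large SNR.} Note that $\eta/(1+\eta)\to 1$ as $\eta\to\infty$. For $|\rho|<1$, the argument therefore tends to $1-\rho^2>0$, and continuity of $\log$ on $(0,\infty)$ gives \eqref{eq:large SNR}. To justify the phrase ``converges to the SVoI'', I would also invoke Proposition 1. As $\eta\to\infty$ the noise variance $\sigma_n^2\to 0$, so the observation becomes uncorrupted: $Y_{s_m,t'_n}\to X_{s_m,t_n}$. The limit expression is exactly the Gaussian mutual information $\I(X_{s,t};X_{s_m,t_n}) = -\tfrac{1}{2}\log(1-\rho^2)$. Under the first-order Markov assumption in space and time, this quantity is the tight upper bound \eqref{eq:upper bound X }. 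The SVoI is sandwiched between the lower bound and this upper bound, and both tend to the same value, so the SVoI itself converges to \eqref{eq:large SNR}.

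\emph{Correlation tending to one.} For fixed finite $\eta$, as $\rho\to1$ the argument tends to $1-\eta/(1+\eta) = 1/(1+\eta) > 0$, and continuity gives \eqref{eq: large rho}. This equals $\tfrac{1}{2}\log(1+\eta)$, the AWGN capacity form, which is a useful sanity check.

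\emph{Vanishing SNR or correlation.} If $\eta\to 0$, then $\eta\rho^2/(1+\eta)\le \eta\to 0$. If instead $\rho\to 0$, then $\eta\rho^2/(1+\eta)\le\rho^2\to 0$. In either case the argument tends to $1$, so $f\to -\tfrac12\log 1 = 0$.

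The only delicate step is the joint limit $\eta\to\infty$ with $|\rho|=1$, where the argument tends to $0$ and $f$ diverges. I would exclude this case explicitly by assuming $|\rho|<1$ in \eqref{eq:large SNR}, interpreting the right-hand side as $+\infty$ otherwise. The main conceptual obstacle is making ``the lower bound converges to the SVoI'' rigorous rather than merely computing a limit. I would handle it through the squeeze argument above, using Proposition 1's bounds together with the fact that Gaussian mutual information is continuous in the covariance entries.
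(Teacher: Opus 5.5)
Your limit computations are exactly the paper's proof. The paper's proof consists only of substituting $\eta\to\infty$, $\rho\to1$, $\eta\to0$ and $\rho\to0$ into the closed form of Proposition 2. Your uniform bounds $\eta\rho^2/(1+\eta)\le\eta$ and $\eta\rho^2/(1+\eta)\le\rho^2$, and your caveat about $|\rho|=1$ at infinite SNR, are more careful but follow the same route.

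Where you go beyond the paper is the clause ``the lower bound converges to the SVoI''. The paper's proof does not address it. It only offers the informal remark after the corollary: the noise vanishes, the observation process becomes Markov, so Proposition 1 holds with equality. Your sandwich $\I(X_{s,t};Y_{s_m,t'_n})\le\mathrm{SVoI}(s,t)\le\I(X_{s,t};X_{s_m,t_n})$ turns that remark into an argument, since the left side tends to the $\eta$-independent right side. The gain is rigour.

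The limitation is that the sandwich is only as strong as the Markov upper bound of Proposition 1, which the paper asserts without proof. The property you actually use is the screening condition $X_{s,t}\perp\!\!\!\perp\mathbf{X}^{m\times n}\,\big|\,X_{s_m,t_n}$. State that as your hypothesis rather than ``first-order Markov in time and space'', because for a genuinely two-dimensional field it can fail. For example, with the separable kernel $e^{-\omega d}e^{-\kappa\tau}$, $X_{s,t}$ and $X_{s_i,t_n}$ are conditionally independent given $X_{s_m,t_n}$ only if $s_m$ lies on the segment from $s$ to $s_i$. When screening fails, the SVoI tends to $\I(X_{s,t};\mathbf{X}^{m\times n})$ as $\eta\to\infty$, which is strictly above $-\tfrac12\log(1-\rho^2)$. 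In that case the first clause of the corollary is false. Under the screening hypothesis, for instance a single device with a temporally Markov kernel as in the paper's bound simulations, your argument is complete.
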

\begin{proof}
    The results are obtained directly from Proposition 2, where we take $\eta \to \infty$, $\rho \to 1$, $\eta \to 0$ and $\rho \to 0$, respectively.
\end{proof}

Equation~\eqref{eq:large SNR} in this corollary verifies the special case given in Proposition 1 in which inequalities take the equal sign in high SNR regimes. In this special case, the observing process will be the same as the underlying process without the noise, thus the observing process will also be Markov. This means that the lower bound of SVoI is going to be tighter when the SNR $\eta$ goes larger. Equation~\eqref{eq: large rho} illustrates that, if the semantic spatiotemporal correlation is strong enough, the SVoI turns out to be independent regardless of space and time. 

\section{Effect of Semantic Spatiotemporal Correlation on SVoI}
Proposition 2 gives the general SVoI expression in Gaussian Markov models and shows that the spatiotemporal correlation property of the underlying data and the SNR of the wireless channel are the two dominant parts. Corollary 1 analyses the effect of SNR on the SVoI. This section considers two representative examples, further exploring the effect of spatiotemporal correlation on the semantic value in detail. 

\subsection{SVoI with Separable Spatiotemporal Correlation}
The first type is the separable spatiotemporal correlation. We take the Ornstein–Uhlenbeck kernel~\cite{ou1930OUmodel} as a representative example, which is able to characterise the exponential decay of information with respect to both temporal difference and spatial distance. The OU kernel is especially appropriate for physical processes in which spatial and temporal variations evolve smoothly and independently, such as modelling temperature fluctuations over a small geographic region. 

The autocorrelation coefficient between two variables with OU kernel is given as
\begin{equation}
\label{eq:OU rho}
     \rho_{\rm{sep}}(d,\tau)=e^{-\omega d}e^{-\kappa \tau},
\end{equation}
where $\omega>0$ and $\kappa>0$ are the spatial and temporal correlation decay rates, respectively. Small $\omega$ reveals strong semantic correlation in the space dimension, while large $\omega$ reveals weak semantic correlation. Similarly, large $\kappa$ reveals strong semantic correlation in the time dimension, while small $\kappa$ reveals weak semantic correlation.

\begin{corollary}
    Under the OU model where spatial and temporal dependencies are separable, the SVoI is given as 
\begin{equation}
\label{eq: OU+SVoI}
      {\rm{SVoI}}(s,t) =  - \frac{1}{2}\log \left( {1 - \frac{{\eta{e^{-2\omega{||s-s_m||}}}{e^{-2\kappa{(t-t_n)}}}}}{{1 + \eta }}} \right).
\end{equation}
\end{corollary}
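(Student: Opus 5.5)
The plan is a direct substitution into Proposition 2, after checking that the OU kernel meets that proposition's hypotheses. The steps are:
\begin{enumerate}
\item Confirm that the separable OU correlation \eqref{eq:OU rho} defines a valid stationary Gaussian Markov model.
\item Evaluate $\rho$ at the lags of the nearest, most recent observation.
\item Square it and substitute into the closed form of Proposition 2.
\end{enumerate}

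\textbf{Step 1: validity of the model.} The covariance $\sigma_x^2 e^{-\omega d}e^{-\kappa\tau}$ is a product of two exponential kernels. Each factor is positive semidefinite: it is the correlation of a stationary OU/AR(1) process in its own coordinate. By the Schur product theorem, the product is a valid stationary covariance in $(s,t)$, so $\{X_{s,t}\}$ can be taken jointly Gaussian. Markovianity holds in time for each fixed location because the exponential temporal kernel is the OU correlation. Separability then carries the Markov structure into the spatial ordering used in the paper, with locations indexed from $s_1$ to $s_m$ by proximity to $s$. The noise model $Y=X+N$ with i.i.d. $\mathcal N(0,\sigma_n^2)$ noise is unchanged. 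Hence every hypothesis behind Proposition 2 holds with $\eta=\sigma_x^2/\sigma_n^2$.

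\textbf{Step 2: evaluating the correlation.} Take $d=\|s-s_m\|$ and $\tau=\|t-t_n\|$. Since $t_n\le t$, we have $\tau=t-t_n\ge 0$. This gives
\[
\rho_{\rm sep}(\|s-s_m\|,t-t_n)=e^{-\omega\|s-s_m\|}e^{-\kappa(t-t_n)},
\]
and therefore
\[
\rho^2=e^{-2\omega\|s-s_m\|}e^{-2\kappa(t-t_n)} .
\]

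\textbf{Step 3: substitution.} Inserting this $\rho^2$ into the expression of Proposition 2 yields \eqref{eq: OU+SVoI} immediately. The result is to be read, as in Proposition 2, as the tractable lower-bound form of the SVoI. As a sanity check, $\rho^2\in(0,1]$ because $\omega,\kappa>0$ and $d,\tau\ge0$. Then $1-\eta\rho^2/(1+\eta)\ge 1/(1+\eta)>0$, so the logarithm is well defined and the value is nonnegative. This is consistent with Corollary 1 in the limits $\rho\to1$ (that is, $d,\tau\to0$) and $\rho\to0$.

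\textbf{Main obstacle.} The only nontrivial point is Step 1, i.e. whether the Gaussian Markov assumption of Proposition 2 is genuinely satisfied by the separable kernel. I would handle it with the product-kernel argument above and, if needed, note that Proposition 2's formula depends only on the pairwise correlation between $X_{s,t}$ and $Y_{s_m,t'_n}$. That formula is the Gaussian mutual information $-\tfrac12\log(1-\mathrm{corr}^2(X_{s,t},Y_{s_m,t'_n}))$, with $\mathrm{corr}^2=\rho^2\sigma_x^2/(\sigma_x^2+\sigma_n^2)$. It requires only joint Gaussianity and stationarity, so the substitution is valid regardless of finer Markov considerations.
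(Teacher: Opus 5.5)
Your proposal is correct and matches the paper's proof, which is exactly the substitution of $\rho_{\mathrm{sep}}(\|s-s_m\|,t-t_n)=e^{-\omega\|s-s_m\|}e^{-\kappa(t-t_n)}$ into Proposition 2, read, as you rightly note, as the lower bound $\I(X_{s,t};Y_{s_m,t'_n})$. One caveat: your Step 1 claim that separability carries a Markov structure into the proximity ordering of the 2D locations does not hold in general, since an exponential kernel on randomly placed planar points does not make them a Markov chain; as your last paragraph says, this step is unnecessary anyway, because the formula uses only joint Gaussianity and stationarity of that single pair.
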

\begin{proof}
    This result is obtained directly be letting~\eqref{eq:OU rho} into Proposition 2.
\end{proof}
Corollary 2 is established by the separable spatiotemporal model which demonstrates that when spatial and temporal correlations can be approximated as independent. In this case, the SVoI admits a concise closed-form expression which generalised the tradional AoI, i.e., $t-t_n$. This corollary shows that SVoI could work as the baseline that clearly characterises how the semantic value evolves with respect to time, space, and channel dynamics. 

\subsection{SVoI with Coupled Spatiotemporal Correlation}
In some practical scenarios, spatial and temporal dependencies are not fully separable but exhibit intricate coupling effects. For example, in environmental monitoring and climate modelling networks, spatial variability may influence temporal correlation structures, and vice versa. Therefore, Gneiting kernel~\cite{gneiting2002nonseparable} is further considered, providing a more flexible and realistic characterisation of spatiotemporal processes in which spatial and temporal dependencies are not independent but inherently coupled. 

The autocorrelation coefficient between two variables with Gneiting kernel is given as
\begin{equation}
\label{eq:Gneiting rho}
\rho_{\rm{coup}}(d,\tau)
=\frac{1}{\big(1+\alpha\,\tau^{u}\big)^{\varphi}}
\exp\!\left(-\,\frac{\beta\,d^{v}}{\big(1+\alpha\,\tau^{u}\big)^{\theta v}}\right).
\end{equation}
Here, $\alpha$ and $\beta$ represent the temporal and spatial scales, respectively; $u, v \in (0,2]$ control the temporal and spatial smoothness; $\varphi >0$ is the temporal decay exponent; $\theta \in [0,1]$ is the space–time coupling parameter in which $\theta =0$ yields the separable special case.

\begin{corollary}
    Under the Gneiting model where spatial and temporal dependencies are coupled, the SVoI is given as 
\begin{equation}
\label{eq: Gneiting+SVoI}
      {\rm{SVoI}}(s,t) =  - \frac{1}{2}\log \left( {1 -\frac{\eta}{1+\eta}\cdot \frac{{ e^{
-\,\frac{2\beta\,{||s-s_m||}^{v}}{(1+\alpha\,{(t-t_n)}^{u})^{\theta v}}}}}{{{\big(1+\alpha\,({t-t_n})^{u}\big)^{2\varphi}} }}} \right).
\end{equation}
\end{corollary}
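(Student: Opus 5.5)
The approach is the one used for Corollary 2: substitute the Gneiting correlation~\eqref{eq:Gneiting rho} into the closed-form expression of Proposition 2. Proposition 2 already shows that, in the stationary Gaussian Markov model, the (lower-bound) SVoI depends on the source only through the single number $\rho(||s-s_m||,||t-t_n||)$. So no new information-theoretic argument is needed, only the evaluation of $\rho^2$ at the right lag.

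First, I fix the arguments. The spatial lag is $d=||s-s_m||$. The temporal lag is $\tau=||t-t_n||=t-t_n$, because the target time $t$ is the current time and $t_n\le t$ by the network setting; this is what lets the absolute value be dropped, exactly as in~\eqref{eq: OU+SVoI}. I also check that $\rho_{\rm coup}$ is a valid correlation coefficient here. By Gneiting's construction~\cite{gneiting2002nonseparable}, with $u,v\in(0,2]$, $\varphi>0$ and $\theta\in[0,1]$, the function~\eqref{eq:Gneiting rho} is a positive-definite stationary correlation function. It equals $1$ at $(0,0)$ and takes values in $(0,1]$. Hence the jointly Gaussian, stationary setting assumed in Proposition 2 holds.

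Next, I square the kernel. Write $A=1+\alpha (t-t_n)^u>0$. Then
\begin{equation*}
\rho_{\rm coup}^2=\frac{1}{A^{2\varphi}}\exp\!\left(-\frac{2\beta ||s-s_m||^{v}}{A^{\theta v}}\right),
\end{equation*}
since squaring the exponential doubles its exponent and squaring $A^{-\varphi}$ gives $A^{-2\varphi}$. Substituting this into $-\tfrac12\log\bigl(1-\eta\rho^2/(1+\eta)\bigr)$ and factoring out $\eta/(1+\eta)$ gives~\eqref{eq: Gneiting+SVoI} directly.

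The only point needing care is whether the Gaussian Markov assumption behind Proposition 2 is compatible with a non-separable Gneiting kernel. However, the formula in Proposition 2 uses only the pairwise joint Gaussianity of $X_{s,t}$ and $Y_{s_m,t'_n}$, so the Markov structure plays no role in this evaluation. As a sanity check, I would set $\theta=0$, which should give a separable expression of the same form as Corollary 2.
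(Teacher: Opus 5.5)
Your proposal is correct and follows the paper's proof: substitute \eqref{eq:Gneiting rho} with $d=\|s-s_m\|$ and $\tau=t-t_n$ into Proposition 2, then square the kernel. One side claim is false: the kernel is not positive definite for every $\varphi>0$ and $\theta\in[0,1]$. Integrating it over $D$-dimensional space gives a constant times $(1+\alpha\tau^u)^{\theta D-\varphi}$, which must not exceed its value at $\tau=0$, so validity requires $\varphi\ge\theta D$. The derivation does not depend on this claim, though, because the corollary presupposes a valid Gneiting model.
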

\begin{proof}
    This result is obtained directly be letting~\eqref{eq:Gneiting rho} into Proposition 2.
\end{proof}

Compared with the two corollaries, we have the following observation.
\begin{remark}
    The separable case in Corollary 2 can be obtained as a special case of its coupled counterpart in Corollary 3.
\end{remark}
\begin{proof}
    When the coupling parameters are set to $\theta =0$, $v=1$, $\varphi=1$, $u=1$, the autocorrelation coefficient is given as
    \begin{equation}
    \begin{aligned}
         \rho_{\rm{coup}}(d,\tau)&=\frac{1}{(1+\alpha\,\tau)} \exp\!\left(-\,{\beta\,d}\right)\\
&\approx e^{-\alpha \tau} e^{-\beta d}\\
&=\rho_{\rm{sep}}(d,\tau).
    \end{aligned}
    \end{equation}
In this case, the coupled Gneiting kernel in Corollary 3 reduces to the separable OU form in Corollary 2.
\end{proof}

The study of these two representative classes of covariance functions is complementary. In both cases, the closed-form expression of the SVoI can be directly expressed as a function of the spatial variable (i.e., distance), the temporal variable (i.e., AoI), the channel variable (i.e., SNR), and the correlation parameters. The OU model provides a simple and interpretable baseline, in which spatial and temporal effects are independent, whereas the Gneiting model can capture the coupled interactions observed in many physical processes.

\section{Numerical Results}
In this section, numerical results are presented to evaluate the analytical results derived in Section III and IV and to illustrate the behaviour of the proposed SVoI metric. $40$ unmanned devices are distributed randomly in a $30 \times 30$ area. The monitoring point is set as the centre of the square area. The communication process is modelled as a M/M/1 queue. For each device, the sampling time is followed by a Poission process with rate $\lambda = 1$. For each data sample, the service time is modelled as an exponential process with rate $\mu = 2$. The total time slots $T_{\max}$ is $50$. For the case with coupled spatiotemporal correlation, the parameters $u,v,\varphi,\alpha,\beta$ are fixed and all set as $1$. Monte Carlo simulations are conducted in which the total number of running rounds is $10^5$.

\begin{figure}
    \centering
    \includegraphics[height=6.6cm]{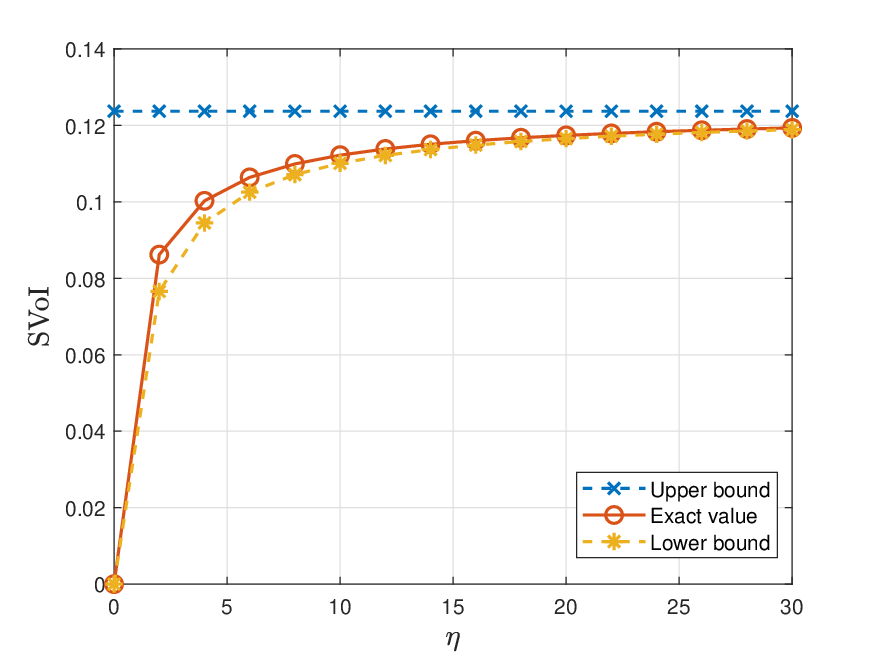}
    \caption{SVoI and its upper and lower bounds versus SNR.}
    \label{fig: bound}
\end{figure}

\begin{figure}
    \centering
    \includegraphics[height=6.6cm]{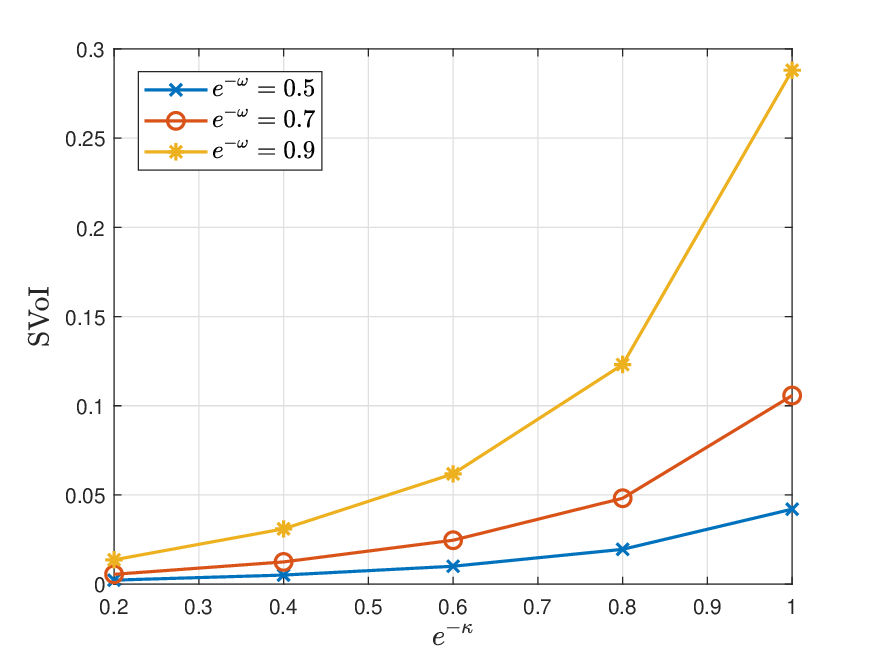}
    \caption{Separable spatiotemporal correlation: SVoI versus temporal correlation for different levels of spatial correlation. }
    \label{fig: sep}
\end{figure}

Fig.~\ref{fig: bound} shows the SVoI as well as its upper and lower bounds versus the SNR $\eta$. The bounds are obtained by Proposition 1 in which $m$ and $n$ are set as $1$ and $2$, respectively. As the SNR increases, the lower bound approaches the exact SVoI, and both of them converge to the upper bound. This behaviour is consistent with the analysis in Corollary 1 and indicates that the lower bound provides an accurate approximation of the exact SVoI, especially under high SNR conditions. This figure verifies the tightness of the bounds in Proposition 1.

Fig.~\ref{fig: sep} considers the separable spatiotemporal model and shows the performance of SVoI versus the temporal correlation coefficient $\kappa$ for different values of spatial correlation $\omega$. The SVoI increases monotonically with the intensity of temporal correlation. Similarly, stronger spatial correlation leads to a higher SVoI for a given temporal correlation. These results validate Corollary 2, demonstrating that different levels of spatial and temporal correlations yield different levels of semantic value.

Fig.~\ref{fig: coup} considers the coupled spatiotemporal model and shows the performance of SVoI versus SNR $\eta$ for different values of coupled spatiotemporal correlation intensity $\theta$. It is observed that the SVoI increases and converges with the SNR regardless of the spatiotemporal correlation. Stronger coupling yields higher SVoI at a given SNR. This observation aligns with Corollary 3 and highlights the impact of non-separable spatiotemporal dependencies on semantic information.

\begin{figure}
    \centering
    \includegraphics[height=6.6cm]{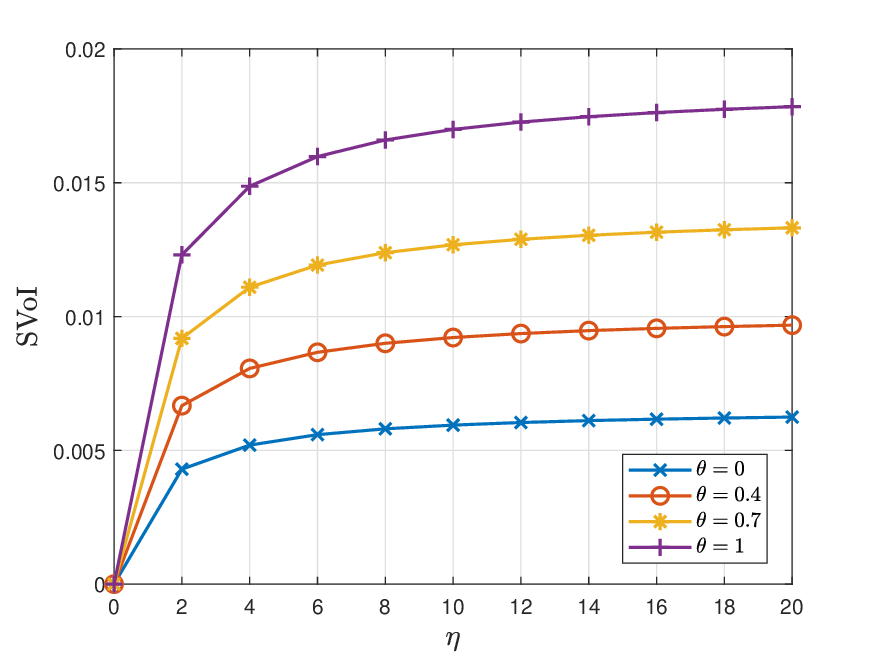}
    \caption{Coupled spatiotemporal correlation: SVoI versus SNR for different levels of spatiotemporal correlation. }
    \label{fig: coup}
\end{figure}

\section{Conclusions}
This paper proposed an information-theoretic metric named SVoI to quantify the semantic value of spatiotemporal information. The SVoI measures the amount of semantic information contained in past observations to predict the unknown status via their semantic spatiotemporal correlation. Closed-form expressions and tractable bounds of the SVoI were derived in Gaussian Markov models. Numerical results verified the theoretical analysis and demonstrated the tightness and effectiveness of the SVoI and its bounds. The proposed SVoI can capture the combined effects of SNR and spatiotemporal correlation properties, making it suitable for the performance evaluation of semantic communication systems.

\section*{Acknowledgment}
This work was supported in part by the National Natural Science Foundation of China (NSFC) under Grants 62401321, 62293484 and 62425110, and in part by Beijing Natural Science Foundation under Grant JR25020.



\section*{Appendix}
Variables ${X_{s,t}}$ and ${Y_{{s_m},t{'_n}}}$ are jointly Gaussian distributed, thus Proposition 2 follows from Proposition 1 in which the SVoI is lower bounded by
    \begin{multline}
        \I({X_{s,t}},{Y_{{s_m},t{'_n}}}) = \frac{1}{2}\log \left( {\frac{{\sigma _x^2\sigma _y^2}}{{|{\bf{\Sigma _{x,y}}}|}}} \right) \\
        = \frac{1}{2}\log \left( {\frac{1}{{1 - \gamma _{\text{XY}}^2(s,t;{s_m},t{'_n})}}} \right).
    \end{multline}
    Here, ${\bf{\Sigma _{x,y}}}$ is the covariance matrix of ${X_{s,t}}$ and ${Y_{{s_m},t{'_n}}}$ and $|\cdot|$ is the determinant operator. $\gamma_{\text{XY}}(\cdot)$ is the correlation coefficient which is written as
\begin{equation}
    \begin{aligned}
        {\gamma _\text{XY}}(s,t;{s_m},t{'_n}) &= \frac{{\E\left[ {({X_{s,t}} - {\mu _x})({Y_{{s_m},t{'_n}}} - {\mu _y})} \right]}}{{{\sigma _x}{\sigma _y}}}\\
        &=  \frac{{\E\left[ {({X_{s,t}} - {\mu _x})({X_{{s_m},{t_n}}} + {N_{{s_m},t{'_n}}} - {\mu _x})} \right]}}{{{\sigma _x}\sqrt {\sigma _x^2 + \sigma _n^2} }}\\
        & = \frac{{E\left[ {{X_{s,t}}{X_{{s_m},{t_n}}}} \right] - \mu _x^2}}{{\sigma _x^2\sqrt {1 + \frac{1}{\eta }} }}\\
        & = \frac{{\rho (||s - {s_m}||,||t - {t_n}||)}}{{\sqrt {1 + \frac{1}{\eta }} }}
    \end{aligned}
\end{equation}

\bibliographystyle{IEEEtran}
\bibliography{reference.bib}

@ARTICLE{RateDistortion23TCOM,
  author={Stavrou, Photios A. and Kountouris, Marios},
  journal={IEEE Trans. Commun.}, 
  title={The Role of Fidelity in Goal-Oriented Semantic Communication: {a} Rate Distortion Approach}, 
month={Jul.},
  year={2023},
  volume={71},
  number={7},
  pages={3918-3931},
  doi={10.1109/TCOMM.2023.3274122}}

@ARTICLE{AOISpatiotemporal,
  author={Yang, Howard H. and Arafa, Ahmed and Quek, Tony Q. S. and Poor, H. Vincent},
  journal={IEEE Transactions on Wireless Communications}, 
  title={Spatiotemporal Analysis for Age of Information in Random Access Networks Under Last-Come First-Serve With Replacement Protocol}, 
  year={2022},
  volume={21},
  number={4},
  pages={2813-2829},
  doi={10.1109/TWC.2021.3116041}}

@ARTICLE{RemoteSensingSpatiotemporal,
  author={Zhu, Linye and Xing, Huaqiao and Sun, Wenbin and Du, Shouhang and Fan, Deqin},
  journal={IEEE Transactions on Geoscience and Remote Sensing}, 
  title={Semi-Supervised Semantic Remote Sensing Image Change Detection Using Multimodal Spatiotemporal Association Knowledge}, 
  year={2025},
  volume={63},
  number={},
  pages={1-20},
  doi={10.1109/TGRS.2025.3621732}}

@ARTICLE{nicolasCAE,
  author={Luo, Jiping and Pappas, Nikolaos},
  journal={IEEE Transactions on Communications}, 
  title={Semantic-Aware Remote Estimation of Multiple Markov Sources Under Constraints}, 
  year={2025},
  volume={73},
  number={11},
  pages={11093-11105},
  doi={10.1109/TCOMM.2025.3571950}}

@misc{niu2024,
      title={A Mathematical Theory of Semantic Communication}, 
      author={Kai Niu and Ping Zhang},
      year={2024},
      eprint={2401.13387},
      archivePrefix={arXiv},
      primaryClass={cs.IT},
      url={https://arxiv.org/abs/2401.13387}, 
}

@book{SemEntropyCarnap52,
  title={An outline of a theory of semantic information},
  author={Carnap, Rudolf and Bar-Hillel, Yehoshua and others},
  year={1952},
  publisher={Research Laboratory of Electronics, Massachusetts Institute of Technology}
}

@article{ou1930OUmodel,
  title={On the Theory of the {Brownian} Motion {II}},
  author={Uhlenbeck, G. E. and Ornstein, L. S.},
  journal={Physical Review},
  volume={36},
  number={5},
  pages={823--841},
  year={1930},
  doi={10.1103/PhysRev.36.823}
}

@INPROCEEDINGS{SemChannelCapBao11,
  author={Bao, Jie and Basu, Prithwish and Dean, Mike and Partridge, Craig and Swami, Ananthram and Leland, Will and Hendler, James A.},
  booktitle={Proc. IEEE Netw. Science Workshop}, 
  title={Towards a theory of semantic communication}, 
month={Jun.},
  year={2011},
  volume={},
  number={},
  pages={110-117},
  doi={10.1109/NSW.2011.6004632}}

@article{gneiting2002nonseparable,
  title={Nonseparable, stationary covariance functions for space--time data},
  author={Gneiting, Tilmann},
  journal={Journal of the American Statistical Association},
  volume={97},
  number={458},
  pages={590--600},
  year={2002},
  publisher={Taylor \& Francis}
}

@ARTICLE{ProcIEEESurvey,
  author={Qin, Zhijin and Liang, Le and Wang, Zijing and Jin, Shi and Tao, Xiaoming and Tong, Wen and Li, Geoffrey Ye},
  journal={Proc. IEEE}, 
  title={{AI} Empowered Wireless Communications: {From} Bits to Semantics}, 
 year={2024},
  volume={112},
  number={7},
  pages={621-652},
  doi={10.1109/JPROC.2024.3437730}}

@INPROCEEDINGS{AoI2012infocom,
author={S. {Kaul} and R. {Yates} and M. {Gruteser}},
booktitle={Proc. IEEE INFOCOM},
title={Real-time status: {how} often should one update?},
year={2012},
volume={},
number={},
pages={2731-2735},}

@ARTICLE{previousTIT,
author={Wang, Zijing and Badiu, Mihai-Alin and Coon, Justin P.},
journal={IEEE Trans. Inf. Theory}, 
title={A framework for characterizing the value of information in hidden {M}arkov models},
year={2022},
volume={68},
number={8},
pages={5203-5216},
doi={10.1109/TIT.2022.3167545}}

@ARTICLE{AoIIEnabler,
  author={Maatouk, Ali and Assaad, Mohamad and Ephremides, Anthony},
  journal={IEEE Trans. Wireless Commun.}, 
  title={The Age of Incorrect Information: {An} Enabler of Semantics-Empowered Communication}, 
  year={2023},
  volume={22},
  number={4},
  pages={2621-2635},
  doi={10.1109/TWC.2022.3213227}}

\end{document}